%% LaTeX Template for ISIT 2020
%%
%% by Stefan M. Moser, October 2017
%%
%% derived from bare_conf.tex, V1.4a, 2014/09/17, by Michael Shell
%% for use with IEEEtran.cls version 1.8b or later
%%
%% Support sites for IEEEtran.cls:
%%
%% http://www.michaelshell.org/tex/ieeetran/
%% http://moser-isi.ethz.ch/manuals.html#eqlatex
%% http://www.ctan.org/tex-archive/macros/latex/contrib/IEEEtran/
%%

\documentclass[conference,letterpaper]{IEEEtran}

%% depending on your installation, you may wish to adjust the top margin:
\addtolength{\topmargin}{9mm}

%%%%%%
%% Packages:
%% Some useful packages (and compatibility issues with the IEEE format)
%% are pointed out at the very end of this template source file (they are
%% taken verbatim out of bare_conf.tex by Michael Shell).
%
% *** Do not adjust lengths that control margins, column widths, etc. ***
% *** Do not use packages that alter fonts (such as pslatex).         ***
%
\usepackage[utf8]{inputenc}
\usepackage[T1]{fontenc}
\usepackage{url}
\usepackage{ifthen}
\usepackage{cite}
\usepackage[cmex10]{amsmath}
\usepackage{bm}
% Use the [cmex10] option to ensure complicance
                             % with IEEE Xplore (see bare_conf.tex)

%% Please note that the amsthm package must not be loaded with
%% IEEEtran.cls because IEEEtran provides its own versions of
%% theorems. Also note that IEEEXplore does not accepts submissions
%% with hyperlinks, i.e., hyperref cannot be used.

\interdisplaylinepenalty=2500 % As explained in bare_conf.tex

%%%%%%
% correct bad hyphenation here
\hyphenation{op-tical net-works semi-conduc-tor}

\usepackage{amssymb,amsfonts,amsmath,amsthm,amscd,dsfont,mathrsfs,bbold,pifont}
\usepackage{blkarray}
\usepackage{graphicx,float,psfrag,epsfig,color}

\usepackage{amssymb,amsfonts,amsmath,amsthm,amscd,dsfont,mathrsfs,bbm,tikz}
\usepackage{graphicx,float,psfrag,epsfig,color}
\usepackage{blkarray}
\usepackage{qtree}
\usepackage{comment}
\theoremstyle{plain}
\newtheorem{defin}{Definition}[section]
\theoremstyle{plain}
\newtheorem{thm}{Theorem}[section]
\theoremstyle{plain}
\newtheorem{claim}{Claim}[section]
\theoremstyle{plain}

\theoremstyle{plain}

\theoremstyle{plain}
\newtheorem{model}{Model}[section]
\theoremstyle{plain}
\newtheorem{condition}{Condition}[section]
\theoremstyle{plain}

\theoremstyle{plain}
\newtheorem{prop}{Proposition}[section]
\theoremstyle{plain}

\theoremstyle{plain}

\theoremstyle{plain}
\newtheorem{conjecture}{Conjecture}[section]
\theoremstyle{remark}

\newcommand{\distas}[1]{\mathbin{\overset{#1}{\sim}}}

\DeclareMathOperator*{\argmax}{arg\,max}
% ------------------------------------------------------------
\begin{document}
\title{Polarization in Attraction-Repulsion Models}

% %%% Single author, or several authors with same affiliation:
% \author{%
%   \IEEEauthorblockN{Stefan M.~Moser}
%   \IEEEauthorblockA{ETH Zürich\\
%                     ISI (D-ITET)\\
%                     CH-8092 Zürich, Switzerland\\
%                     Email: moser@isi.ee.ethz.ch}
% }

%%% Several authors with up to three affiliations:
\author{%
  %\IEEEauthorblockA{ETH Zürich\\
                    % ISI (D-ITET), ETH Zentrum\\
                    % CH-8092 Zürich, Switzerland\\
                    % Email: moser@isi.ee.ethz.ch}

    \IEEEauthorblockN{Elisabetta Cornacchia (EPFL)}
    \and
    \IEEEauthorblockN{Neta Singer (Columbia University)}
    \and
    \IEEEauthorblockN{Emmanuel Abbe (EPFL)}
}

\maketitle

%%%%%%
%% Abstract:
%% If your paper is eligible for the student paper award, please add
%% the comment "THIS PAPER IS ELIGIBLE FOR THE STUDENT PAPER
%% AWARD." as a first line in the abstract.
%% For the final version of the accepted paper, please do not forget
%% to remove this comment!
%%
\begin{abstract}
%Opinions infiltrate human behavior and decision making every day. An individual's opinion forms and evolves through the process of social interaction. 
This paper introduces a model for opinion dynamics, where at each time step, randomly selected agents see their opinions --- modeled as scalars in $\bm{[0,1]}$ --- evolve depending on a local interaction function. In the classical Bounded Confidence Model, agents opinions get attracted when they are close enough. The proposed model extends this 
by adding a repulsion component, which models the effect of opinions getting further pushed away when dissimilar enough. With this repulsion component added, and under a repulsion-attraction cleavage assumption, it is shown that a new stable configuration emerges beyond the classical consensus configuration, namely the polarization configuration. More specifically, it is shown that total consensus and total polarization are the only two possible limiting configurations. The paper further provides an analysis of the infinite population regime in dimension 1 and higher, with a phase transition phenomenon conjectured and backed heuristically.  

%and we provide a first analysis of convergence. In our synthetic approach, we model an opinion by a scalar in the interval $\mathbf{[0,1]}$; at each time step, a random pair of agents interact through a two dimensional function. This function defines the impact of the interaction on the two opinions. We study functions that attract agents towards each other when opinion dissimilarity is low (below a certain $\mathbf{\tau}$ threshold), and which repulse agents away from each other when dissimilarity is high.
%For a finite number of agents, we show that such processes converge to configurations of either total polarization or total agreement of opinions. For an infinite population, we prove that the probability of polarizing (depending on $\mathbf{\tau}$) tends to a step function with a phase transition behavior around threshold $\tau= \mathbf{\frac{1}{2}}$. Furthermore, we extend the model to multiple dimensions to model interactions over multiple topics of discussion, where the opinion on each topic evolves through the  social interaction. %In particular, we model the opinions with a vector laying in a D-dimensional hypercube, where D is the number of topics, and we prove convergence for an infinite population.

\end{abstract}

%% The paper must be self-contained. However, if you are referring to
%% a full version for checking certain proofs, please provide the
%% publicly accessible location below.  If the paper is completely
%% self-contained, you can remove the following line from your
%% submission.
%\textit{A full version of this paper is accessible at:}
%\url{https://2020.ieee-isit.org/}

\section{Introduction}
Opinion dynamics have been widely studied in the recent years, driven in part by understanding when consensus can be reached or not \cite{Hegselmann02opiniondynamics,gomez2012bounded,Mossel_2017,cite-key,PhysRevE.97.022312}, as well as by understanding when polarization phenomenon may take place \cite{del2017modeling,Krueger_2017,doi:10.1080/0022250X.2018.1517761,deffuant2004modelling}. Many of the models that have been developed are based on binary opinions that agents update under social interactions, such as the voter model and the majority rule \cite{Gastner_2018,Choi_2019}. %\cite{sirbu2017opinion}.
However, these models also have their limitations, such as the absence of more temperate positions. Of interest to us are models where opinions are continuous variables, such as political inclinations between Left and Right, opinions on topics, or values of a utility function in economics. The Bounded Confidence Models (\cite{Deffuant2000,Hegselmann02opiniondynamics,gomez2012bounded,PhysRevE.97.022312}) include dynamics where continuous positions are updated under binary encounters whenever the opinion dissimilarity between two participants is below a given threshold. This assumes a constructive discussion between the paired agents when they already agree closely, that causes their opinions to be further attracted to each other.

In this paper, we combine this homogenization/attraction effect with a repulsion effect. People are likely to examine opposite positions in a biased way and to repulse under disagreement or far enough opinions \cite{articleLGP}. Hence, the interaction between individuals with very different opinions may result in an even larger separation \cite{deffuant2004modelling}. %In psychology and sociology, terms like confirmation bias or echo chambers refer to similar behaviours \cite{ doi:10.1037/1089-2680.2.2.175}. 
%This framework can help explaining phenomena such as extremism and radicalization, and hence it has considerable implications for economic, social and political institutions. 
%In \cite{del2017modeling} an Unbounded Confidence Model is introduced and numerical simulations show the existence of stable states with a split of the population into two opposing factions, as it is often observed during online debates \cite{article}.
% This work aims at introducing a new class of models that share a common dynamic. At each step of the process, a random matching makes the agent interact pairwise, then every model is characterized by an interaction function that defines the updated positions of every pair of agents. We focus on the analysis of the convergence of the model to equilibrium.

This paper introduces a new class of models describing pairwise interaction under a common dynamic.
The encounters between pairs of individuals are governed by a random selection and a two dimensional function (the interaction function) defines the updated positions of the two agents. We consider functions with an attraction effect if the opinions dissimilarity is below a given threshold $\tau$, similarly to the Bounded Confidence Model, but we also add a repulsion effect if the opinions dissimilarity is above $\tau$. Under such pairwise interactions, are the opinions of a crowd converging to  stable configurations that can be characterized? We shall see that under some hypotheses, these are of two kinds: the total consensus configuration, as in the Bounded Confidence Model, but also the polarization configuration. We also note that while the setup is different than the polarization phenomenon in polar coding \cite{Arikan09}, some of the tools (e.g. movement at non-stable configurations, stability at extremes, martingale convergence theorem) are similarly relevant.  
%The only restriction of our class of models is that at each step of the process, agents are matched randomly by pair.
%We show that this class is general and encompasses numerous real world crowd interaction scenarios.
% makes the agent interact pairwise, then every model is characterized by an interaction function that defines the updated positions of every pair of agents.
%We study under which conditions the model converges to an equilibrium.

%The use of a pairwise interaction model allows the problem to be computationally tractable and provides a general structure explaining a wide range of real world phenomena.
\subsection{Our Model} \label{sec:ourmodel}
Consider a population of $n$ agents whose opinions lie in $I=[0,1]$. Let $ \Phi_t = ( \phi_t^1,...,\phi_t^n) \in  I^n$ denote the state of the system at time $t \in \mathbb{N}$. Here $\phi_t^i \in [0,1]$ denotes the opinion of agent $i$ at time $t$.

\begin{defin}[Interaction function]
Let $f_i: I^2 \rightarrow I$, $i=1,2$, be measurable functions. We say that $f=(f_1,f_2)$ is an interaction function if for any $x,y \in I$
\begin{align}
\label{for:orderinvariance}
f_1(x,y) = f_2(y,x).
\end{align}
\end{defin}
The interaction function determines the impact of the pairwise encounter on the opinions of the two agents, depending on the opinions they had before. It will usually be a function of the relative distance between the two opinions.
In (\ref{for:orderinvariance}) we want the interaction of two agents to be independent of their order.

The random process evolves as follows. At time $t=0$, the opinions $\phi_0^i$, $i \in [n]$, are drawn iid under some probability distribution $D_0$ defined on $I$. Let $f$ be an interaction function. At each time step $t\geq 1$, we choose a random pair of agents uniformly at random from ${[n] \choose 2}$ and independently from the other time steps, say $(i_t,j_t)$, and we make them interact with each other through $f$. Then, $\phi_{t+1}^{i_t} = f_1(\phi_t^{i_t},\phi_t^{j_t})$ and $\phi_{t+1}^{j_t} = f_2(\phi_t^{i_t},\phi_t^{j_t})$. The opinions of the other agents stay unchanged. We denote by $\left(\{\Phi_t\}_{t \in \mathbb{N}}, f, D_0 \right) $ the random interaction process associated to the interaction function $f$ and the initial distribution $D_0$. There are two sources of randomness in this process. The first one comes from the initial distribution from which the agents are sampled at time $0$; the second one emerges at each step from the random selection of the pair of interacting agents. We describe how the Bounded Confidence Model (\cite{Deffuant2000,Hegselmann02opiniondynamics,gomez2012bounded,PhysRevE.97.022312}) is captured by our framework. 

%Below, we present two examples of models for opinion dynamics that are captured by our framework.

\begin{model}[Bounded Confidence Model,\cite{Deffuant2000,Hegselmann02opiniondynamics,gomez2012bounded,PhysRevE.97.022312}]
\label{ex:BC}
Assume $D_0 = \mathcal{U}([0,1])$, the uniform distribution in $[0,1]$. Let $\tau,\lambda$ be in $\left(0, 1 \right)$. Consider
\begin{align*}
f(x,y) := \left\{
                \begin{array}{ll}
                 \left( x + \frac{\lambda}{2} (y-x), y + \frac{\lambda}{2} (x-y)	\right) &\text{if } |x-y| \leq \tau,\\
                  (x,y) &\text{if } |x-y| > \tau.
                \end{array}
              \right.
\end{align*}
This model assumes that if two agents with relatively similar opinions encounter, they have a constructive discussion and their opinions end up being closer to each other. On the other hand, two agents with relatively distant opinions are unable to interact, and the encounter has no effect on their opinions.
\end{model}
We now introduce the following model.
\begin{model}[Attraction-Repulsion Model]
\label{ex:taumod}
Let $D_0 = \mathcal{U}([0,1])$. Without loss of generality, assume $x \leq y$. The remaining cases follow from (\ref{for:orderinvariance}). Let $\tau, \lambda, \mu$ be in $(0,1)$. Consider
% \begin{align*}
% \label{for:ftau}
% f(x,y) := \left\{
%                 \begin{array}{ll}
%                  \left(\lambda x + (1-\lambda) y, \lambda y + (1- \lambda) x 	\right) &\text{if } |x-y| \leq \tau,\\
%                   \left( \mu x, \mu y + (1-\mu) \right)  &\text{if } |x-y| > \tau.
%                 \end{array}
%               \right.
% \end{align*}
\begin{align*}
\label{for:ftau}
f(x,y) := \left\{
                \begin{array}{ll}
                 \left( x + \frac{\lambda}{2}(y-x),  y + \frac{\lambda}{2}(x-y) 	\right) &\text{if } |x-y| \leq \tau,\\
                  \left( x - \mu x,  y + \mu (1-y) \right)  &\text{if } |x-y| > \tau.
                \end{array}
              \right.
\end{align*}
Here we assume that if two agents with similar opinions encounter, they reduce their distance by a factor $1-\lambda$, similarly to the example before. However, we now also assume that a discussion between two agents with far enough opinions causes an even larger separation of the two parties, hence the distances with the respective extremes are reduced by a factor $1-\mu$. %Figure \ref{fig:simtaumod} shows two simulations for this model.
\end{model}
\begin{figure}[h]
    \centering
    \includegraphics[width=0.24\textwidth]{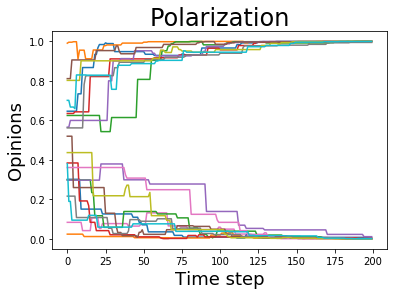}
    \includegraphics[width=0.24\textwidth]{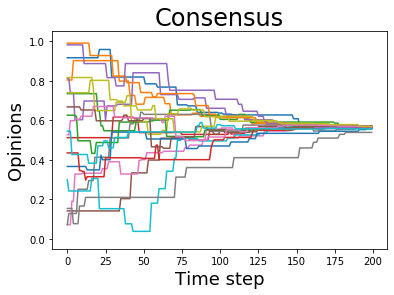}
    \caption{Dynamics of the Attraction-Repulsion model over time. For both plots we took $n=20$ agents, and $\lambda =\mu = 1/2 $. The plot on the left shows a simulation where $\tau = 0.4$. 
    %We see that the system ends up in a polarized stable state. 
    The plot on the right was obtained taking $ \tau = 0.6$. 
    % We notice that the totality of the agents agree to a unique opinion.
    }
    \label{fig:simtaumod}
\end{figure}
In both examples, $\tau$ is a measure of the tolerance that people have towards the opinions of the interacting peers.
\subsection{Prior Results}
In \cite{gomez2012bounded,Hegselmann02opiniondynamics} it is proved that for any $\tau$ and $\lambda$, for time $t$ that goes to infinity, the Bounded Confidence Model converges almost surely to a stationary distribution that is either a Dirac measure at one single point (`Total Consensus') or a combination of Dirac measures at points separated by at least $\tau$ (`Partial Consensus'). 
\subsection{Our Contribution}
We show a similar result for the Attraction-Repulsion Model. For such model, the stable configurations (or the absorbing states) are those where the distance between any pair of points is either $0$ or $1$. We say that these are `Trivialized' configurations. They can be grouped into two categories: the configurations of `Polarization' and those of `Total Consensus' (Definition \ref{def:trivconf}). We show that for any $\tau, \lambda$ and $\mu$, as $t$ goes to infinity, the Attraction-Repulsion model converges to either a `Polarization' or a `Total Consensus' configuration (Figure \ref{fig:simtaumod}). Note that the configurations of `Partial Consensus' are not stable, unlike the Bounded Confidence Model. In the following, for brevity, we use the term `Consensus' to indicate `Total Consensus'.
\begin{defin}[Trivialized configuration] \label{def:trivconf}
We say that $\hat{\Phi} =(\hat{\phi}^1,...,\hat{\phi}^n) \in I^n$ is a trivialized configuration if one of the following conditions is satisfied.
\begin{itemize}
    \item Polarization: $\hat{\phi}^i \in \{0,1\} $ for every $i \in [n]$,
    \item Consensus: for some $\alpha \in I$, $\hat{\phi}^i = \alpha $ for every $i \in [n]$.
\end{itemize}
We denote by $\mathcal{T}_{P,n}$ the set of polarized configurations in $I^n$, by $\mathcal{T}_{C,n}$ the set of configurations of consensus in $I^n$ and by $\mathcal{T}_{n} = \mathcal{T}_{P,n} \cup \mathcal{T}_{C,n}$ the set of all trivialized configurations in $I^n$.
\end{defin}

\begin{defin}[Trivialization] \label{def:trivialization}
Let  $\left(\{\Phi_t\}_{t \in \mathbb{N}}, f, D_0 \right) $ be a random interaction process.
We say that the process trivializes if for any $\varepsilon > 0$, there exist $Y \in \mathcal{T}_n$ and $t_0 \in \mathbb{N}$ such that for any $t \geq t_0$
\begin{align}
\Vert \Phi_t -Y \Vert_{\infty} < \varepsilon,
\end{align}
where $ \Vert \Phi \Vert_{\infty} = \max_i |\phi_i|$ denotes the infinity norm.
\end{defin}
The probabilistic convergence will come later in Theorem \ref{thm:condtriv}.

% \section{Main results}
\section{Trivialization for Finite Population}
% \begin{defin}[Stable configuration]
% Let $f$ be an interaction function. We say that $\Phi_* = (\phi_*^1,...,\phi_*^n) \in I^n$ is a stable configuration for our process if for every $(i,j) \in {[n] \choose 2}$, $f(\phi_*^i,\phi_*^j) = (\phi_*^i,\phi_*^j)$.
% \end{defin} 
Consider $f$ satisfying the following condition:
\begin{condition} \label{cond:fixpoints}
$(x_1,x_2)  $ is a fixed point of $f$ if and only if $|x_1-x_2| \in \{ 0,1\}$,
\end{condition}
which means that the set of stable configurations for the interaction process associated to $f$ corresponds to $\mathcal{T}_n$. One can argue that if Condition \ref{cond:fixpoints} holds, then the process trivializes almost surely. Unfortunately, this is not sufficient. Consider the following counterexample. Let $\tau_1,\tau_2$ be in $(0,1)$ and assume $\tau_1 <\tau_2$. Consider $f$ that satisfies Condition \ref{cond:fixpoints} and let $(x',y') = f(x,y)$ be such that if $|x-y|<\tau_1$, then $|x' -y'| <|x-y|$ (attraction); if $|x-y|>\tau_2 $, then $|x' -y'| >|x-y|$ (repulsion); if $\tau_1 \leq |x-y| \leq \tau_2 $, then $ |x'-y'| = |x-y|$ (for instance they shift by the same quantity towards the furthest border). The corresponding process does not trivialise with probability 1. 
%in \cite{mscthesis} we included a formal definition and simulations for this model.

This means that we need to add other conditions to Condition \ref{thm:condtriv} to guarantee trivialization of the process. In Theorem \ref{thm:condtriv} we show almost sure trivialization for any interaction function that has an attraction-repulsion cleavage property similar to Model \ref{ex:taumod}. The class of functions that satisfy the conditions of Theorem \ref{thm:condtriv} includes Model \ref{ex:taumod} and it is slightly more general, since it allows the interaction to be asymmetrical and non-linear. 

We provide a proof of the following result in Section \ref{sec:prooffinite}.
\begin{thm}
\label{thm:condtriv}
Let $f$ be an interaction function, let $D_0$ be a non-degenerate distribution in $I$ and assume $n <\infty$. Let $\left(\{ \Phi_t \}_{t \in \mathbb{N}}, f, D_0  \right) $ be the associated random process. Denote $(x',y') = f(x,y)$ and assume that $f$ satisfies the following attraction-repulsion condition: there exist $\tau \in I$, $C_f^A<1$ and $C_f^R >1$ such that for any $x,y \in I$ (assume $x<y$) 
\begin{itemize}
\item if $ |x-y| \leq \tau$, then $ x',y' \in  [x,y]$ and $  |x'-y'|\leq C_f^A |x-y|$ (attraction);
\item if $|x-y| > \tau$, then $ x',y' \in  I \setminus [x,y]$ and $|x'-y'| \geq C_f^R |x-y|$ (repulsion).
\end{itemize}

Then, the process trivializes with probability $1$.
\end{thm}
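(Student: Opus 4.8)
The plan is to show that $\Phi_t$ converges almost surely to a (random) point of $\mathcal{T}_n$, which in particular yields trivialization in the sense of Definition~\ref{def:trivialization}. I would obtain this from three ingredients: a reachability estimate, a Borel--Cantelli argument, and an absorption property near $\mathcal{T}_n$.

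\emph{Reachability.} First I would show that for every $\varepsilon>0$ there exist $K=K(\varepsilon)\in\mathbb{N}$ and $\rho=\rho(\varepsilon)>0$, depending only on $f,n,\varepsilon$, such that from \emph{every} configuration there is a schedule of at most $K$ pair choices whose outcome is within $\|\cdot\|_\infty$-distance $\varepsilon$ of $\mathcal{T}_n$. Such a schedule is built explicitly from the two uniform constants. If the current diameter is $\le\tau$, then repeatedly selecting the current extreme pair $(\argmin_i\phi^i,\argmax_i\phi^i)$ triggers attraction and shrinks the diameter by the factor $C_f^A<1$ each time, so a bounded number of such steps brings the configuration within $\varepsilon$ of a consensus point. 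If the current diameter is $>\tau$, selecting that same extreme pair triggers repulsion and sends the two extreme agents into $I\setminus[\min_i\phi^i,\max_i\phi^i]$, toward $0$ and $1$; iterating pins the extremes within $\varepsilon$ of the endpoints, after which each remaining agent can be driven into the $\varepsilon$-neighbourhood of $\{0,1\}$ by repeatedly pairing it with whichever extreme agent lies at distance $>\tau$ from it --- one always exists, for otherwise the diameter would be $\le\tau$. Since each prescribed pair is selected with probability $1/\binom{n}{2}$ and the number of steps is bounded uniformly via $C_f^A$ and $C_f^R$, one gets $\rho=\binom{n}{2}^{-K}$, uniformly over starting configurations.

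\emph{Reaching and sticking to a neighbourhood of $\mathcal{T}_n$.} Since the interacting pair is chosen i.i.d.\ across time, applying the reachability estimate over the disjoint windows $[mK,(m+1)K)$, $m\ge0$, and a standard Borel--Cantelli argument yields that almost surely $\|\Phi_s-Y\|_\infty<\varepsilon$ for some $Y\in\mathcal{T}_n$ for infinitely many $s$. Next I would show there is $\varepsilon_0>0$ --- small relative to $\tau$, $1-\tau$ and $C_f^R$ --- such that once $\Phi_s$ lies within $\varepsilon_0$ of $\mathcal{T}_n$, the trajectory $(\Phi_t)_{t\ge s}$ converges to a point of $\mathcal{T}_n$. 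If $\Phi_s$ is within $\varepsilon_0$ of a consensus configuration, then $\mathrm{diam}(\Phi_s)<\tau$, so from time $s$ on every interaction is an attraction; hence $\min_i\phi_t^i$ is non-decreasing, $\max_i\phi_t^i$ is non-increasing, both converge, and their difference tends to $0$ --- otherwise selecting the extreme pair (which occurs infinitely often) would force the diameter strictly below its limit --- so $\Phi_t\to(\alpha,\dots,\alpha)\in\mathcal{T}_{C,n}$. If $\Phi_s$ is within $\varepsilon_0$ of a polarization configuration, every agent lies in $[0,\varepsilon_0]\cup[1-\varepsilon_0,1]$; for $\varepsilon_0$ small this set is invariant (same-side pairs attract and stay on their side; opposite-side pairs, being at distance $>\tau$ and sent into the complement of their spanning interval, also stay on their respective sides), so the partition of the agents into a left and a right group is frozen. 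If one group is empty the other evolves by attractions only and converges to a consensus point; if both groups are nonempty, cross-group repulsions occur infinitely often and each one pushes the involved agent strictly toward its endpoint by a uniform multiplicative factor, so, combined with the within-group contraction, every agent in the left group converges to $0$ and every agent in the right group to $1$, i.e.\ $\Phi_t\to Y\in\mathcal{T}_{P,n}$.

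\emph{Conclusion and main difficulty.} Combining the two steps: almost surely the trajectory enters the $\varepsilon_0$-neighbourhood of $\mathcal{T}_n$ after finitely many steps, and from then on converges to a point of $\mathcal{T}_n$; hence the process trivializes with probability $1$. I expect the reachability step to be the crux: constructing the driving schedule for the whole asymmetric, non-linear class allowed by the hypotheses, and --- above all --- bounding its length uniformly over all configurations. The delicate point is that each selection of the extreme pair changes which two agents are extreme, so one must verify that the geometric progress toward $\{0,1\}$ (respectively toward consensus) is not undone before the next selection, which is precisely where the uniform constants $C_f^A<1$ and $C_f^R>1$ enter. A secondary technical point is the within-group convergence in the absorption step for non-linear $f$, where the pair average is not conserved; this is handled by re-running the contraction argument of the consensus case inside each of the two clusters.
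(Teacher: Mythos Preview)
Your overall architecture---a uniform reachability estimate, a Borel--Cantelli argument over disjoint time windows, and an absorption/convergence analysis near $\mathcal{T}_n$---is essentially the paper's approach. The paper packages the first two steps as ``$A_\varepsilon^C$ is uniformly transient'' (Proposition~\ref{prop:uniftransience} and Claim~\ref{claim:M}), but the substance is the same: from every configuration there is a schedule of bounded length reaching $A_\varepsilon$, hence probability at least $\binom{n}{2}^{-T}$ of doing so in each window.

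There is, however, a real gap in your reachability construction. After pinning the two extreme agents near $0$ and $1$, you assert that every remaining agent $k$ lies at distance $>\tau$ from at least one of them, ``for otherwise the diameter would be $\le\tau$.'' This inference is wrong: having both extremes within $\tau$ of $k$ only forces the diameter to be $\le 2\tau$, not $\le\tau$. Concretely, take $\tau\ge 1/2$ and $k$ near $1/2$; then $k$ is within $\tau$ of \emph{both} pinned extremes, so pairing $k$ with either triggers attraction, not repulsion, and your schedule never drives $k$ to the boundary. Your construction therefore works only when $\tau<1/2$. The paper treats $\tau\ge 1/2$ by a genuinely different schedule based on \emph{gaps between consecutive agents} rather than the diameter: it first repeatedly attracts the farthest sub-$\tau$ pair until either the configuration is in $A_{C,\varepsilon}$ or some gap exceeds $\tau$, and then repeatedly repels the pair bounding the largest gap, which widens that gap by at least $C_f^R$ every $n-1$ steps and drives the system into $A_{P,\varepsilon}$. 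A second, smaller slip: even in the pure-attraction regime, one selection of the extreme pair does not shrink the diameter by $C_f^A$, because other agents may become the new extremes; the paper's (correct) claim is that the diameter contracts by at least $C_f^A$ after $n-1$ such selections, and this is what yields the uniform bound $K(\varepsilon)$.
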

Note that the assumptions of Theorem \ref{thm:condtriv} include Condition \ref{cond:fixpoints}. However, we believe that the conditions of Theorem \ref{thm:condtriv} may not be necessary, and almost sure trivialization may be proved for a larger class of interaction functions.

\section{Proof of Theorem \ref{thm:condtriv}}
\label{sec:prooffinite}

For $X \in I^n$, let $\{\Phi_t^X\}_{t \in \mathbb{N}}$ be the random process associated to $f$ starting from configuration $X$ (i.e. $\Phi_0^X= X$).

Let $\varepsilon < \min \{\tau/2, (1-\tau)/2 \}$. Consider the set of states $\varepsilon$-close to a trivialized configuration:
\begin{align}
    A_{\varepsilon} := \{ X \in I^n : \min_{Y \in \mathcal{T}_{n}} \Vert X -Y \Vert_{\infty} < \varepsilon \}
\end{align}
Observe that $A_{\varepsilon}$ is an absorbing set, i.e. $\mathbb{P}(\Phi_1^X \in A_\varepsilon) =1$.

% Let us define also $A_{P,\varepsilon} := \{ x \in I^n : \min_{y \in \mathcal{T}_{P,n}} \Vert x -y \Vert_{\infty} < \varepsilon \} $ and $A_{A,\varepsilon} := \{ x \in I^n : \min_{y \in \mathcal{T}_{A,n}} \Vert x -y \Vert_{\infty} < \varepsilon \} $. Clearly, $A_{\varepsilon} = A_{P,\varepsilon} \cup A_{A,\varepsilon} $

Define the set of ``promising'' states at $t$ steps:
\begin{align}
    V_{\varepsilon}(t) := \left\lbrace X \in A_{\varepsilon}^C : P^t(X,A_{\varepsilon}) \geq {n \choose 2}^{-t} \right\rbrace, \label{for:Vm}
\end{align}
where $P^t(X,A_{\varepsilon}) = \mathbb{P} (\Phi_t^X \in A_{\varepsilon})$ is the probability of going from $X \in I^n$ to somewhere in $A_{\varepsilon}$ in $t$ steps.

Notice that since $A_{\varepsilon} $ is absorbing, $V_{\varepsilon} (t) \subseteq V_{\varepsilon} (t+1) $, for any $t > 0$.
\begin{prop} \label{prop:uniftransience}
For any $t$, $V_\varepsilon(t)$ is uniformly transient.
\end{prop}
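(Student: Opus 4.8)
The plan is to establish the standard characterization of uniform transience, namely $\sup_{X \in I^n} \mathbb{E}\big[\sum_{k \ge 0} \mathbf{1}\{\Phi_k^X \in V_\varepsilon(t)\}\big] < \infty$, i.e. the expected number of visits to $V_\varepsilon(t)$ is bounded uniformly over the initial state. Write $B := V_\varepsilon(t)$ and $\delta := \binom{n}{2}^{-t}$, so that by (\ref{for:Vm}) one has $P^t(Y, A_\varepsilon) \ge \delta$ for every $Y \in B$; recall also that $B \subseteq A_\varepsilon^C$ and that $A_\varepsilon$ is absorbing.

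The one observation driving the proof is that reaching $A_\varepsilon$ is irreversible: since $\mathbb{P}(\Phi_1^X \in A_\varepsilon) = 1$ for $X \in A_\varepsilon$, induction shows that once $\Phi_k^X \in A_\varepsilon$ we have $\Phi_m^X \in A_\varepsilon$, hence $\Phi_m^X \notin B$, for all $m \ge k$. Consequently, for any $Y \in B$ I would first record the estimate
\begin{align*}
\mathbb{P}\big(\exists\, m \ge t : \Phi_m^Y \in B\big) \;\le\; \mathbb{P}\big(\Phi_t^Y \notin A_\varepsilon\big) \;\le\; 1 - \delta,
\end{align*}
which says that from any state of $B$, with probability at least $\delta$ the process enters $A_\varepsilon$ within $t$ steps and never returns to $B$ afterwards.

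Next I would set up a renewal-type decomposition: for a fixed starting point $X$, define $S_1 := \inf\{k \ge 0 : \Phi_k^X \in B\}$ and $S_{j+1} := \inf\{k \ge S_j + t : \Phi_k^X \in B\}$ (with $\inf \emptyset = \infty$), so that every time the chain visits $B$ falls in exactly one window $[S_j, S_j + t)$ and each such window contains at most $t$ visits; hence the number of visits $N$ obeys $N \le t\,\#\{j \ge 1 : S_j < \infty\}$. Applying the strong Markov property at the stopping time $S_j$ and the estimate above yields $\mathbb{P}(S_{j+1} < \infty \mid \mathcal{F}_{S_j}) \le 1 - \delta$ on $\{S_j < \infty\}$, so $\mathbb{P}(S_j < \infty) \le (1-\delta)^{j-1}$ uniformly in $X$, and therefore
\begin{align*}
\mathbb{E}\Big[\sum_{k \ge 0} \mathbf{1}\{\Phi_k^X \in B\}\Big] \;\le\; t \sum_{j \ge 1} \mathbb{P}(S_j < \infty) \;\le\; \frac{t}{\delta} \;=\; t\binom{n}{2}^{t} \;<\; \infty,
\end{align*}
independently of $X$, which is the claim.

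The argument is entirely soft and uses none of the attraction--repulsion structure of $f$ --- only that $A_\varepsilon$ is absorbing and disjoint from $V_\varepsilon(t)$, together with the lower bound built into the definition of $V_\varepsilon(t)$. The only points requiring a bit of care, rather than a genuine obstacle, are the bookkeeping that every visit to $B$ is captured by one of the length-$t$ windows (so that counting ``rounds'' suffices) and checking that the $S_j$ are indeed stopping times for the natural filtration, which is what legitimizes the strong Markov step.
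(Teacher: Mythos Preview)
Your argument is correct: the renewal decomposition into length-$t$ windows, combined with the uniform $\delta$-chance of absorption within each window, gives the geometric tail $\mathbb{P}(S_j<\infty)\le(1-\delta)^{j-1}$ and hence the uniform bound $t/\delta$ on the expected occupation time of $B$. The bookkeeping that every visit to $B$ lies in a unique window $[S_j,S_j+t)$ and the use of the strong Markov property (available here since the chain lives on the Polish space $I^n$) are both sound.

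As for comparison with the paper: there is essentially nothing to compare, since the paper does not prove the proposition but simply cites \cite{meyn1993markov,mscthesis}. Your proof is exactly the kind of elementary geometric-trials argument that underlies the relevant result in Meyn--Tweedie, so you have written out what the paper defers to the literature. If anything, your version is more self-contained and makes explicit that the only ingredients are the absorbing property of $A_\varepsilon$ and the defining lower bound $P^t(\cdot,A_\varepsilon)\ge\binom{n}{2}^{-t}$ on $V_\varepsilon(t)$.
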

$V_\varepsilon(t)$ is uniformly transient if there exists $M_t < \infty$ such that  $\sum_{m=1}^{\infty} P^m(X,V_\varepsilon(t)) \leq M_t $ for all $X \in I^n$ (see e.g. \cite{meyn1993markov}).
Proposition \ref{prop:uniftransience} follows from results on Markov chains theory, a proof can be found in \cite{meyn1993markov,mscthesis}.

\begin{claim}
\label{claim:M}
There exists $T < \infty$, such that $V_\varepsilon(T) = A_{\varepsilon}^C$.
\end{claim}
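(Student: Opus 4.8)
The plan is to reformulate the statement combinatorially and then exhibit, from every starting configuration, a short prescribed schedule of interacting pairs that forces the configuration into $A_\varepsilon$. The point is that a prescribed sequence of $t$ pair-choices has probability exactly $\binom{n}{2}^{-t}$, so $P^t(X,A_\varepsilon)\ge\binom{n}{2}^{-t}$ holds if and only if there exists at least one sequence of $t$ interactions carrying $X$ into $A_\varepsilon$; since $A_\varepsilon$ is absorbing, any shorter such sequence can be padded to length $t$ (this also re-explains $V_\varepsilon(t)\subseteq V_\varepsilon(t+1)$). Hence Claim~\ref{claim:M} is equivalent to the assertion that there is a single finite bound $T=T(n,\tau,\varepsilon,C_f^A,C_f^R)$, independent of $X$, such that from every $X\in I^n$ some schedule of at most $T$ interactions reaches $A_\varepsilon$; one then takes $T$ to be this bound.

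First I would build the schedule by cases on the diameter $D(X)=\max_i\phi^i-\min_i\phi^i$. If $D(X)\le\tau$, every pair lies within the attraction threshold, and because attraction keeps both updated opinions inside $[x,y]$ the diameter never increases, hence stays below $\tau$ forever; repeatedly interacting the current pair of extreme agents then contracts $D$ by a fixed factor $\rho(n,C_f^A)<1$ over each block of at most $n-1$ interactions, and after $O(n\log(1/\varepsilon))$ interactions $D<2\varepsilon$, which puts $X$ within $\varepsilon$ of the consensus configuration at the midpoint of the current opinion interval. If $D(X)>\tau$ I would instead drive the system to polarization: while some agent is not yet within $\varepsilon$ of $\{0,1\}$, pick such an agent and interact it with a partner more than $\tau$ away, so that repulsion pushes it strictly outside its interval toward the nearer endpoint; the one subtlety is an agent within $\tau$ of every other agent, for which one first applies repulsion to the outermost pair a bounded number of times to create opinions arbitrarily close to $0$ and to $1$, after which that agent is either more than $\tau$ from one of these clusters and is repelled toward the opposite endpoint, or within $\tau$ of one of them and is attracted into it. Each agent then needs only $O(\log(1/\varepsilon))$ such interactions, so $O(n\log(1/\varepsilon))$ interactions suffice in all; taking $T$ to be the larger of the two bounds proves $V_\varepsilon(T)=A_\varepsilon^C$.

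The main obstacle is making this bookkeeping genuinely uniform in $X$. This is delicate for two reasons. First, $f$ is only assumed measurable, so one cannot argue that $A_\varepsilon$ is reached from a whole neighbourhood by continuity and compactness; instead one must check that the geometric progress of each schedule --- contraction of $D$ by $C_f^A$ in the clustered case, contraction of the distance to $\{0,1\}$ in the polarized case --- follows solely from the interval-confinement and contraction inequalities of the attraction-repulsion condition, which are insensitive to the precise choice of $f$. Second, in the spread case the treatment of ``stuck'' middle agents --- those within $\tau$ of every other agent --- requires care: one must verify that manufacturing opinion mass near both endpoints and then sweeping the remaining agents into $\{0,1\}$ terminates within a number of steps depending only on $n$, $\tau$, $\varepsilon$ and the contraction constants, and this uniform termination bound is the crux of the proof.
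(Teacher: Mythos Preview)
Your reduction to exhibiting a deterministic schedule of pair-choices is exactly the paper's starting point, and your clustered case $D(X)\le\tau$ matches the paper's consensus argument. The spread case $D(X)>\tau$ also agrees with the paper when $\tau<\tfrac12$: then no agent can lie within $\tau$ of both anchors, so every remaining agent can be repelled toward an endpoint and your sweep terminates in a number of steps depending only on $n,\tau,\varepsilon,C_f^R$.

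The gap is your treatment of ``stuck'' agents when $\tau\ge\tfrac12$. An agent at $x\in(1-\tau,\tau)$ is within $\tau$ of \emph{both} anchors, and your fallback ``attracted into it'' fails under the bare hypotheses of Theorem~\ref{thm:condtriv}: attraction only guarantees $x',a'\in[a,x]$ with $|x'-a'|\le C_f^A|x-a|$, so the anchor $a\approx 0$ may be pulled all the way to $x$ rather than $x$ being pulled to $a$. The merged pair can then sit in $(1-\tau,\tau)$, within $\tau$ of the surviving anchor near $1$, so no repulsion is available to re-establish an anchor near $0$; the uniform termination you correctly flag as ``the crux'' breaks down along this route. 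The paper handles $\tau\ge\tfrac12$ by a different mechanism that never moves an anchor by attraction: it works with gaps between \emph{consecutive} opinions. If some consecutive gap exceeds $\tau$, repeatedly repelling the pair straddling the largest such gap widens it until it exceeds $1-\varepsilon$, forcing polarization. If no consecutive gap exceeds $\tau$, one repeatedly attracts the pair realizing the largest sub-$\tau$ distance; this either drives all pairwise distances below $\tau$ (reducing to your consensus case) or eventually opens a gap larger than $\tau$ (reducing to the previous case). Splitting on $\tau<\tfrac12$ versus $\tau\ge\tfrac12$, rather than only on $D(X)$, is what makes the argument go through.
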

Assume that the claim is true. Then, by Proposition \ref{prop:uniftransience}, $A_{\varepsilon}^C$ is uniformly transient, i.e. $\sum_{m=1}^{\infty} P^m(X,A_{\varepsilon}^C) <\infty$ for every $ X \in I^n$. By Borel-Cantelli Lemma, for every $X \in I^n$, $\mathbb{P}( \lim_{t \rightarrow \infty} \Phi_t^X \in A_{\varepsilon}^C) =0$ and consequently $\mathbb{P}( \lim_{t \rightarrow \infty} \Phi_t^X \in A_{\varepsilon}) =1$. This holds for any $0 <\varepsilon < \min \{\tau/2, (1-\tau)/2 \}$, thus the process trivializes almost surely.

It only remains to prove the claim.

\begin{proof}[Proof of Claim \ref{claim:M}]
For any finite sequence of pairs $ p_t = \{ (i,j)_t \}_{t \leq T} \in {[n] \choose 2}^T $ for $T \in \mathbb{N}$, let $F_t^{X,p}$ be the deterministic process that at each step $t$ is forced to choose $p_t$ as interacting pair of agents and such that $F_0^{X,p} = X$. It is enough to show that for any $X \in A_{\varepsilon}^C$, there exists $p_t$ such that $F_T^{X,p} \in A_{\varepsilon}$. 

In fact, $P^T(X,A_\varepsilon) \geq \mathbb{P}( \Phi_t^X = F_t^{X,p} \quad \forall t\leq T ) = {n \choose 2}^{-T}$, that implies that $X \in V_\varepsilon(T)$.
%  It is enough to show that for any $X \in A_{\varepsilon}^C$, there exists a finite sequence of pairs $ p_t = \{ (i,j)_t \}_{t \leq T} $ ($(i,j)_t \in {[n] \choose 2}$ for any $t$), such that if $F_t^{X,p}$ is the deterministic process that at each step $t$ is forced to choose $p_t$ as interacting pair of agents and $F_0^{X,p} = X$, then $F_T^{X,p} \in A_{\varepsilon}$. 
%  In fact, this implies that $X \in V_\varepsilon(T)$, because the probability that for $t \leq T$, $\Phi_t^X$ is equal to $\{F_t^{X,p}\}_{t \leq T}$ is ${n \choose 2}^{-T}$, that is the probability that at each step $t$ the random pair selected is $p_t$.

Let us define $A_{P,\varepsilon} := \{ X \in I^n : \min_{Y \in \mathcal{T}_{P,n}} \Vert X -Y \Vert_{\infty} < \varepsilon \} $ and $A_{C,\varepsilon} := \{ X \in I^n : \min_{Y \in \mathcal{T}_{C,n}} \Vert X -Y \Vert_{\infty} < \varepsilon \} $, the sets of states $\varepsilon$-close to a polarized and a consensus configuration respectively. Clearly, $A_{\varepsilon} = A_{P,\varepsilon} \cup A_{C,\varepsilon} $.

 Assume initially that $\tau < \frac{1}{2}$.
 Let $X \in A_{C,\tau/2}$, i.e. the distance between any pairs of agents is below $\tau$. Let $ p_t =  \argmax_{(k,l) \in {[n] \choose 2}} \{ |(F_t^{X,p})_k - (F_t^{X,p})_l | \}$, where $(F_t^{X,p})_k$ denotes the $k^{th} $ component of $F_t^{X,p}$. In words, at each step we choose the pair of agents separated by the maximum distance. After $n-1$ steps the maximum distance between any pairs of points decreases by a non vanishing quantity (at least $C_f^A$). Since the number of agents is finite, there exists $L < \infty$ such that $F_L^{X,p} \in A_{C,\varepsilon}$. %for $L = n \log_{C_f} (\frac{2\varepsilon}{\tau})$.
 % Notice that if $\Phi \in A_{A,\tau/2}$, then $P^L(\Phi,A_{\varepsilon}) > {n \choose 2}^{-L}$, for $L = n \log_{C_f} (\frac{2\varepsilon}{\tau})$, thus $\Phi \in V(L)$.

 Let $X \not\in A_{C,\tau/2}$. There exists a pair $r,s \in [n]$ such that $|X_r -X_s| >\tau$, assume without loss of generality that $X_r < X_s$. Let $\delta$ be such that $0<\delta < \frac{1}{2} -\tau$ and $ \delta \leq \varepsilon $. Such $\delta$ exists since $\tau < \frac{1}{2}$. If we repeatedly pair agents $r$ and $s$, they will repulse each other and at each step their distance will increase at least by $C_f^R$. Hence, there exists $L_0 < \infty$, such that after $L_0$ steps, agents $r$ and $s$ are $\delta$-close to $0$ and $1$ respectively, and the other agents stay unchanged. Subsequently, for each $k\in [n] \setminus \{r,s \}$, we repeatedly choose pair $(k,s)$ if $X_k \leq 1/2$, and pair $(k,r)$ otherwise, until $k$ is $\epsilon$-close to $0$ or $1$. Notice that the pairs selected will always repulse, such that $r$ and $s$ stay close to the borders of the interval. Since $n$ is finite, after a finite number of steps such process will be in $A_{P,\varepsilon}$.

 %We build a process that firstly pushes $X_r$ and $X_s$ towards the extremes, and consequently exploits them to polarize the other points. Consider $ p_t = (r,s)$ for any $t \geq 1$. Let $\delta$ be such that $0<\delta < \frac{1}{2} -\tau$ and $ \delta \leq \varepsilon $. Then, there exists $L_0<\infty$ such that $F_{L_0}^r < \delta$ and $|F_{L_0}^s -1| < \delta$. For any $ k \in [n] \setminus \{r,s\}$, consider $(i,j)_{k,m} = (k,u)$ for any $m$, where $ u = \argmax_{u' = r,s} \{ | F_{L_0}^k-F_{L_0}^{u'} | \} $, and let $F_{k,m}$ be the associated process, with initial state $F_{k,0}$ such that $F_{k,0}^k = \phi_k$ and $F_{k,0}^r <\delta $ and $|F_{k,0}^s -1| <\delta $. Notice that $| F_{k,m}^k-F_{k,m}^u | >\tau$ for any $k \in [n]\setminus \{r,s\}$ and for any $m$, thus $F_{k,m}$ will push the $k$-th agent towards the closest extreme, i.e. for any $k \in [n] \setminus \{r,s\}$, there exists $L_k < \infty$ such that $F_{k,L_k}^k <\varepsilon $ or $ |F_{k,L_k}^k-1| < \varepsilon$. Thus, the process $F_m = \cup_{k \in \{0\} \cup [n]\setminus \{r,s\}} \{F_{k,m}\}_{m \leq L_k} $ is such that $ F_L \in A_{P,\varepsilon}$, $L = \sum_{k \in \{0\} \cup [n] \setminus \{r,s\} } L_k < \infty$.

 Let $ \tau \geq \frac{1}{2}$. Let $G_\tau = \{ Z \in I^n : \exists i,j \in [n] \text{ such that } Z_i \sim Z_j \text{ and } |Z_i -Z_j| > \tau \}$, where $Z_i \sim Z_j$ denotes that there are no points between $Z_i $ and $Z_j$. In words, $G_\tau$ includes all the configurations that contain at least one gap larger than $\tau$ between two consecutive points. Assume $X \in G_{\tau}$. Then, at each step we can pair the two agents separated by the largest gap.
 % $(i,j)_t$ such that $ F_t^i \sim F_t^j$ and $|F_m^i -F_m^j| = \max_{l,k \in [n]} \{ |F_m^l - F_m^k | \}$.
 After $n-1$ steps, the largest gap increases by a non vanishing quantity (at least $C_f^R$), hence in a finite number of steps there exists a gap larger than $1-\varepsilon$, thus the system is in $A_{P,\varepsilon}$.

 Let $X \not\in G_{\tau}$. At each step we choose the pair $\{(i,j)\}_{t}$ such that $ |(F_t^{X,p})_i - (F_t^{X,p})_j | = \max_{k,l \in [n]} \{ |(F_t^{X,p})_k - (F_t^{X,p})_l | :  |(F_t^{X,p})_k - (F_t^{X,p})_l |< \tau\}$, i.e.
 we choose the pair of agents whose distance is maximum, but constrained to be smaller than $\tau$. Then, after a finite number of iterations, the distance between any pair of points is either very small, or greater than $\tau$. Thus, either $F_L \in A_{C,\tau/2}$ or $F_L \in G_{\tau} $. The result then follows from previous arguments.
%Notice that for any $k \in [n] \setminus \{r,s\}$, either $| F_L^k-F_L^r | > \tau$ or $| F_L^k-F_L^s | > \tau$.
% For any $x \in A_{\text{agr}}$, $P^M(x,A_\varepsilon) \geq {n \choose 2}^{-M}$, $M = \log_c{\frac{2\varepsilon}{\tau}} {n \choose 2}$.
% Similarly for any $x \in A_{\text{pol}}$.
\end{proof}

\section{Trivialization for Infinite Population}

\subsection{One-dimensional model}
Consider the Attraction-Repulsion Model \ref{ex:taumod}. What happens when the population size $n$ tends to infinity?

Note that in the dynamics described in Section \ref{sec:ourmodel}, each individual is updated at rate $\frac{2}{n}$, which decreases with $n$. In this Section, we consider a slightly different dynamic. At each time step, we select a random matching of the agents (instead of a random pair), and we move each pair of agents independently according to the interaction function described in Model \ref{ex:taumod}. We assume the number of agents to be even. In this way, each individual is updated at rate 1, independently of $n$.

Let $p_P$ be the probability that the process polarizes. Formally, we can define $p_P := \mathbb{P}\{ \exists t: \Phi_t \in A_{P,b} \}$, with $b = \min\{ \tau/2, (1-\tau)/2 \}$.
We carried out some experiments to simulate $p_P$ depending on the threshold $\tau$ and the number of agents $n$. Figure \ref{fig:probpolarization} shows a plot obtained by running a Monte-Carlo simulation for the Attraction-Repulsion model with $\lambda = \mu = 0.5$. We notice that the probability of polarization decreases as $\tau$ increases. We observe that as $n$ increases, $p_P(\tau)$ tends to a step function with transition phase at $\tau \approx 0.52$.

\begin{figure}[h]
\centering
\includegraphics[width = 0.30\textwidth]{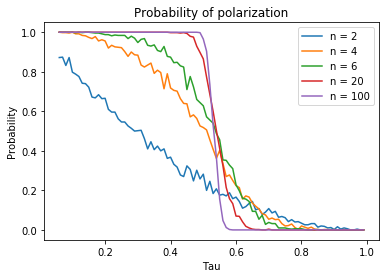}
\caption{Probability of polarization depending on the threshold $\tau$, and the number of agents $n=2,4,6,20,100$, for the Attraction-Repulsion model with $\lambda = \mu = 0.5$. 
% This plot has been obtained through a Monte-Carlo simulation with $1000$ iterations for step and $\tau$ varying between $0.05$ and $1$ with step $0.01$.
}
\label{fig:probpolarization}
\end{figure}
Let $f_t^{(n)}$ be the empirical distribution of the $n$ points at time $t$. As $n$ gets larger, $f_t^{(n)}$ tends to a continuous distribution $f_t$, that satisfies a deterministic PDE that accounts for the variations in the distribution of the agents after each interaction.
Based on numerical approximations, we conjecture that for any $\lambda, \mu $ there exists $\tau_C$ such that 
\begin{align*}
     \lim_{t \rightarrow \infty} \lim_{n \rightarrow \infty} f_t^{(n)}(x) = \left\{
                \begin{array}{ll}
                 \frac{1}{2} \delta(x) + \frac{1}{2} \delta (x-1)&\text{  if  }  \tau < \tau_C,\vspace{0.2cm}\\
                 \delta(x- \frac{1}{2}) &\text{  if  }  \tau > \tau_C,
                \end{array}
              \right.
\end{align*}
where both limits are in distribution and where $\delta (x)$ denotes the Dirac Delta measure centered at $0$.

% \section{Numerical Approach for Infinite Population} \label{sec:numerical}
% Recall that at each time step, two agents are selected and change their positions.
% Let $\tilde{f}_t = f_{\frac{n}{2}t}$ be the auxiliary system where time is rescaled by a factor $\frac{2}{n}$, such that each individual is updated at rate $1$. 
If $\Phi_0 \distas{iid} f_0 := \mathcal{U}([0,1])$, then $\lim_{n \rightarrow \infty} f_0^{(n)} = f_0$ in distribution. Moreover, the continuous densities $f_t(x) $ satisfy the equation
\begin{align}
\label{for:PDE}
\frac{\partial f_t(x)}{\partial t} &= \frac{1}{1-\nu} \int_{x-(1-\nu) \tau}^{x+ (1-\nu) \tau} f_t \left(\frac{x-\nu y}{1-\nu}\right) f_t(y) dy \nonumber \\
& +\frac{1}{1-\mu} f_t\left( \frac{x-\mu}{1-\mu} \right) \int_{0}^{\frac{x-\mu}{1-\mu}-\tau}f_t(y) dy \\
& + \frac{1}{1-\mu} f_t \left(\frac{x}{1-\mu}\right) \int_{\frac{x}{1-\mu} +\tau}^{1} f_t(y) dy - f_t(x), \nonumber
\end{align}
where we wrote $\nu = \frac{\lambda}{2}$ for brevity.
For the positive terms, if an agent is in state $\frac{x-\nu y}{1-\nu}$ and interacts with another agent in $y$, for $y \in [x-(1-\nu) \tau, x+(1-\nu) \tau]$ (i.e. within distance $\tau$), it moves to state $x$. Moreover, an agent in $  \frac{x-\mu}{1-\mu}$ can interact with any agents in the interval $[0,   \frac{x-\mu}{1-\mu}-\tau] $ and be repelled to $x$, or an agent in $ \frac{x}{1-\mu}$ can be matched with an individual in $[\frac{x}{1-\mu} +\tau,1] $ and move to $x$. The negative term follows since a point in $x$ will move away from $x$ after an interaction with any other point in $[0,1] \setminus \{x \}$. Hence, at each $t$ we assume the $n$ agents to be iid realizations from $f_t$.

% If $X_t$ and $X_{t+1}$ are random variables with distributions $f_t$ and $f_{t+1}$ respectively, then $K_{t,\tau}(x,z)$ denotes the distribution $f_{X_{t+1}-X_t|X_t}(z-x|X_t = x)$. In words, if an agent is in $x$ at time $t$, then it can either be matched with any other agent that is between $x+ \tau$ and $1$ (if $x < 1 - \tau$), and consequently its distance with $0$ will decrease by a factor $\mu$, or it can be matched with any point in $[0,x- \tau]$ (if $x> \tau$), and hence its distance with $1$ will decrease by $\mu$, or it can be matched with one point within the range $[x- \tau, x+\tau]$ and decrease its distance with it by a factor $(2\lambda -1)$.
As far as we know, there is no explicit solution to equation (\ref{for:PDE}). We approximated the solution numerically, using a forward Euler method. More specifically, recursively for any time step $t$, we computed $f_t(x) $ in a discretized subset of $I$ as
\begin{align}
f_{t+1}(x) = f_t(x) + \frac{\partial f_t(x)}{\partial t}.
\end{align}
We then created a piecewise constant approximation of  $f_t(x) $ in the rest of the interval, to use in the subsequent iteration. In \cite{gomez2012bounded} a similar numerical approach is described for the Bounded Confidence Model.
It is easy to show that if $f_0$ is a continuous probability density, then $f_t$ is a continuous probability density for any $t$. Moreover, if $f_0$ is symmetric with respect to $\frac{1}{2}$, then $f_t$ preserves this symmetry at every step.

We noticed that as $t$ goes to infinity, there exists a $\tau_C$ such that $f_t$ converges to a symmetric polarized configuration if $\tau < \tau_C$, and $f_t$ converges to consensus at $\frac{1}{2}$ if $\tau > \tau_C$. The value of $\tau_C$ depends on $\lambda$ and $\mu$. For instance, if $\lambda = \mu = 0.5$, $\tau_c \approx 0.526$ (Figure \ref{fig:hist2}).
% \begin{figure}[h]
% \centering
% %tau = 0.3
% \includegraphics[width = 0.15\textwidth]{images/tau03t0.png}
% \includegraphics[width = 0.15\textwidth]{images/tau03t4.png}
% \includegraphics[width = 0.15\textwidth]{images/tau03t8.png}
% %tau = 0.7
% \includegraphics[width = 0.15\textwidth]{images/tau07t0.png}
% \includegraphics[width = 0.15\textwidth]{images/tau07t4.png}
% \includegraphics[width = 0.15\textwidth]{images/tau07t8.png}
% \caption{}
% \label{fig:hist1}
% \end{figure}
\begin{figure}[h]
\centering
%tau = 0.52
\includegraphics[width = 0.16\textwidth]{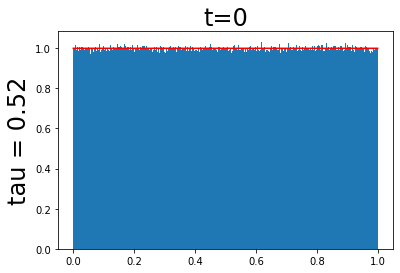}
\includegraphics[width = 0.15\textwidth]{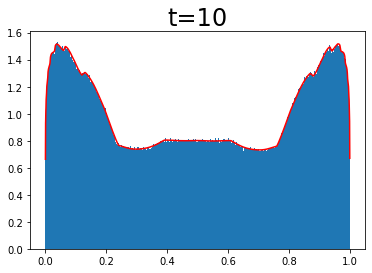}
\includegraphics[width = 0.15\textwidth]{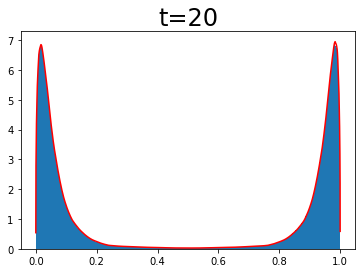}
%tau = 0.53
\includegraphics[width = 0.159\textwidth]{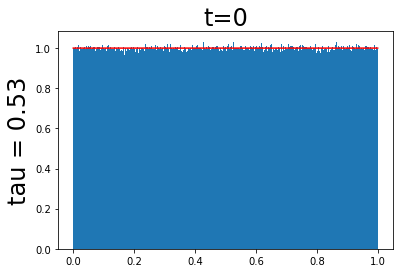}
\includegraphics[width = 0.15\textwidth]{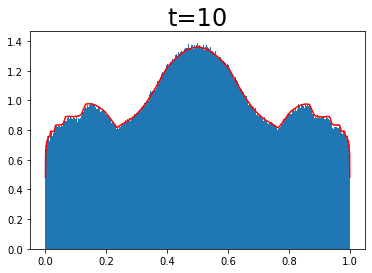}
\includegraphics[width = 0.15\textwidth]{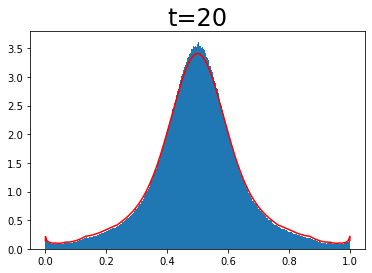}
\caption{$\lambda = \mu = 0.5$. Evolution of $f_t$ at $t=0,10,20$ for $\tau = 0.52$ (above) and $\tau = 0.53$ (below). The red line is the approximation of $f_t$ obtained through a forward Euler method, the blue shadow is an histogram obtained running a simulation of the process with $n = 10^{7}$ agents.}
\label{fig:hist2}
\end{figure}
%Hence, we obtain a sequence of probability distributions that is deterministic. %and for which we can prove weak convergence to a combination of Dirac Delta distributions.
% \begin{thm}
% \label{thm:infinitepop}
% Consider the Attraction-Repulsion model of Example \ref{ex:taumod}, with parameters $\tau, \lambda$ and $\mu$. Let $f_t^{(n)}(x)$ be the empirical distribution of the $n$ points at time $t$. For any $\lambda,\mu $ there exists $\tau_C$ such that
% \begin{align*}
%      \lim_{t \rightarrow \infty} \lim_{n \rightarrow \infty} f_t^{(n)}(x) = \left\{
%                 \begin{array}{ll}
%                  \frac{1}{2} \delta(x) + \frac{1}{2} \delta (x-1)&\text{  if  }  \tau < \tau_C,\vspace{0.2cm}\\
%                  \delta(x- \frac{1}{2}) &\text{  if  }  \tau > \tau_C,
%                 \end{array}
%               \right.
% \end{align*}
% where both limits take place in $\mathcal{D}'(\mathbb{R})$.
% \end{thm}
\subsection{D-dimensional models}
When considering opinion dynamics, the one-dimensional model represents a singular opinion that polarizes or agrees under pairwise interaction.
% However, we would like to consider a higher dimensional model where we assume that agents discuss several topics jointly, in order to investigate their correlation. 
However, we would like to consider a model that accounts for multiple different topics discussed jointly within one interaction.

We create a D-dimensional hypercube to represent D topics for opinion interaction. Agents are encoded as D dimensional vectors in the unit hypercube, i.e. for $i \in [n]$ and $ t \in \mathbb{N}$, $\bm{\phi}_t^i \in [0,1]^D$ denotes the opinion of agent $i$ at time $t$, and $\bm{\Phi_t} = (\bm{\phi_t^1},...,\bm{\phi_t^n}) \in [0,1]^{D \times n}$ denotes the state of the system at time $t$. The movement function takes in two points and maps them along the line that passes through the pair of points. If the Euclidean distance between the points is less than $\tau$, the points move towards each other along the line and their distance is decreased by a factor $\lambda$. If the distance between the two points is larger than $\tau$, the points move away from each other, towards the borders of the hypercube along the line, and the distance between each point and the respective intersection of the line and the boundary is decreased by a factor $\mu$. 

We simulate this interaction on a two dimensional square and find polarization and consensus convergence behavior analogous to the one-dimensional unit interval. We show two simulation results of this interaction function in Figure~\ref{fig:ani}.
\begin{figure}[h]
   
    \centering
    \includegraphics[width = 0.12\textwidth]{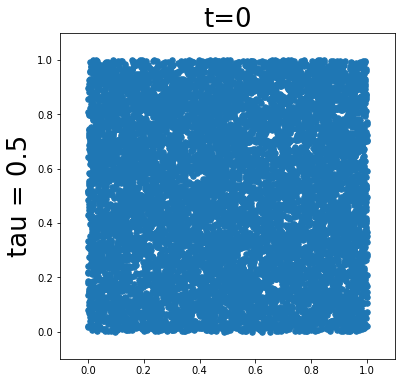}
    \includegraphics[width = 0.11\textwidth]{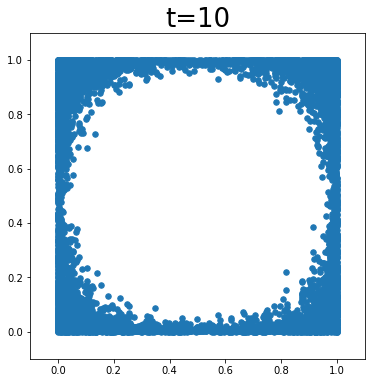}
    \includegraphics[width = 0.11\textwidth]{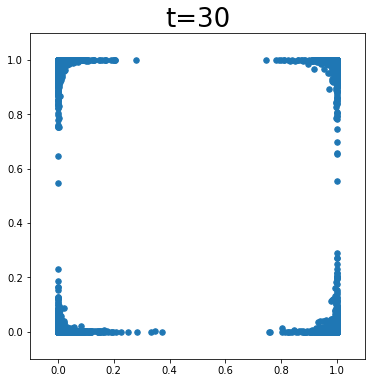}
    \includegraphics[width = 0.11\textwidth]{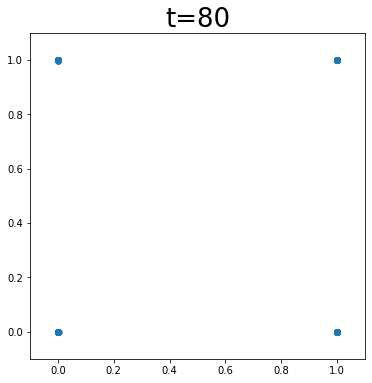}

    \includegraphics[width = 0.12\textwidth]{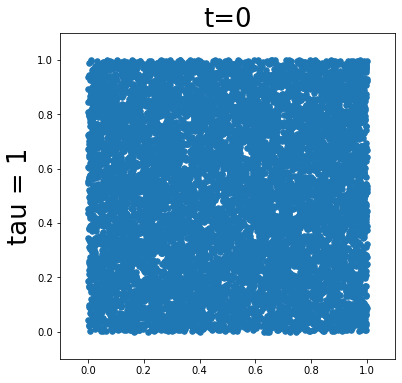}
    \includegraphics[width = 0.11\textwidth]{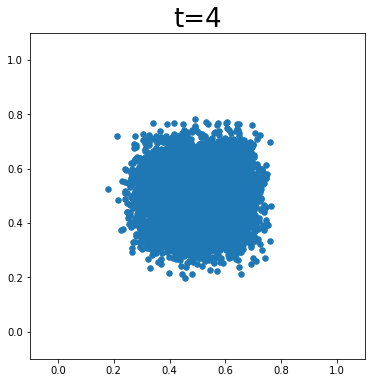}
    \includegraphics[width = 0.11\textwidth]{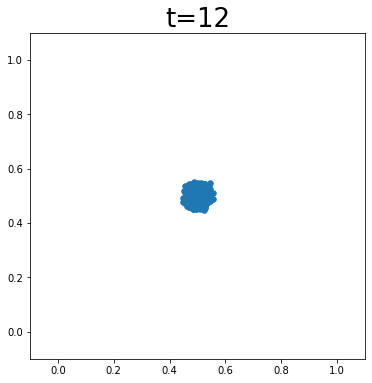}
    \includegraphics[width = 0.11\textwidth]{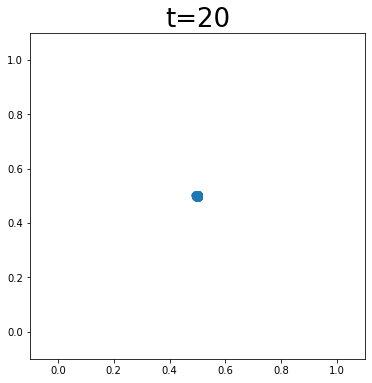}
    \caption{Simulations of points on a square space over a $\tau$ = 0.5 threshold (above) and over a $\tau$ = 1 threshold (below), with $\lambda =\mu = 0.5$.}
     \label{fig:ani}
\end{figure}
% With this model on the unit square, agreement leads to one cluster of mild opinions on each dimension, while polarization leads to 4 clusters of combinations of extreme opinions at the corners of the square. These two fixed points are given by the states $f_P(\vec{x})$ and $f_A(\vec{x})$ such that:
% \begin{align*}
%  f_P(\vec{x}) & = \frac{1}{4}\delta(x_1, x_2) + \frac{1}{4}\delta(x_1 - 1, x_2)  + \frac{1}{4}\delta(x_1, x_2 - 1) \\
% & \qquad + \frac{1}{4}\delta(x_1 - 1, x_2 -1) \\
%  f_A(\vec{x}) & = \delta(\vec{x} - \vec{1/2})
% \end{align*}
With this model on the unit square, consensus leads to one cluster of mild opinions inside the square, while polarization leads to clusters of combinations of extreme opinions along the borders. For $n$ large enough, polarization happens on the four corners of the square, with approximately $n/4$ agents in each corner, and consensus happens in the middle of the square. We conjecture that for $t$ that goes to infinity and for $n$ that tends to infinity,  $f_t^{(n)}(\vec{x})$, the empirical distribution of the $n$ points at time $t$, converges to either a polarization or a consensus state, with transition phase phenomenon similar to the one-dimensional case.
% \begin{align*}
% \lim_{t\to\infty} \lim_{n\to\infty} f_t^{(n)} = \left\{
%                 \begin{array}{ll}
%                  f_P & \text{ if } \tau < \frac{\sqrt{2}}{2} \\
%                  f_A  & \text{ if } \tau > \frac{\sqrt{2}}{2}
%                 \end{array}
%               \right.
% \end{align*}

We infer similar behavior for higher dimensions, with clustering in the borders of the hypercube under polarization, or one cluster inside the hypercube under consensus. This can model radical party formation, and how social interaction and groupthink can enhance extreme opinions.

\section{Future Directions}
\subsection{A martingale approach}
We propose an alternative approach for proving Theorem \ref{thm:condtriv}.
\begin{thm}
\label{thm:martingale}
Let $f$ be an interaction function and let $\left(\{ \Phi_t \}_{t \in \mathbb{N}}, f, D_0  \right) $ be the random process associated to it, with initial distribution $D_0$. Assume that there exists a function $h:[0,1]^n \rightarrow \mathbb{R}$ such that $\{h(\Phi_t)\}_{t \in \mathbb{N}}$ is a non-negative super-martingale (or a bounded sub-martingale) with respect to the canonical filtration and that for any $\varepsilon >0$
\begin{align}
|h(\Phi_{t+1})-h(\Phi_t)| < \varepsilon \implies \min_{Y \in \mathcal{T}_{n}} \Vert \Phi_t -Y \Vert_{\infty} < \delta(\varepsilon) , \label{for:condmg}
\end{align}
where $\delta(\varepsilon)$ is such that $\lim_{\varepsilon \rightarrow 0^+} \delta(\varepsilon) = 0$. Then the process trivializes almost surely.
\end{thm}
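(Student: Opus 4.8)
The plan is to let the martingale convergence theorem do the heavy lifting, exactly as advertised in the introduction for polarization-type phenomena. Since $\{h(\Phi_t)\}_{t\in\mathbb{N}}$ is assumed to be a non-negative supermartingale (the bounded-submartingale case being identical after a sign change), it converges almost surely to a finite random variable $h_\infty$ by Doob's convergence theorem; in particular its increments vanish, i.e. $h(\Phi_{t+1})-h(\Phi_t)\to 0$ almost surely.

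Next I would transport this to the state sequence through hypothesis (\ref{for:condmg}). On the almost-sure event that the increments vanish, pick any sequence $\varepsilon_k\downarrow 0$: for each $k$ there is a random time $t_k$ with $|h(\Phi_{t+1})-h(\Phi_t)|<\varepsilon_k$ for all $t\ge t_k$, whence $\min_{Y\in\mathcal{T}_n}\|\Phi_t-Y\|_\infty<\delta(\varepsilon_k)$ for all $t\ge t_k$ by (\ref{for:condmg}). Since $\delta(\varepsilon_k)\to 0$, this gives $\min_{Y\in\mathcal{T}_n}\|\Phi_t-Y\|_\infty\to 0$ almost surely; equivalently, every subsequential limit of the trajectory lies in the closed set $\mathcal{T}_n$.

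The final step is to upgrade this to trivialization in the sense of Definition \ref{def:trivialization}, i.e. to convergence to a single configuration. Here I would use the coarse geometry of $\mathcal{T}_n$. For $\delta<\tfrac14$, the $\delta$-neighbourhood of $\mathcal{T}_n$ is a disjoint union of: (i) small balls around the $2^n-2$ ``mixed'' polarized states, each at $\ell_\infty$-distance $\ge\tfrac12$ from the rest of $\mathcal{T}_n$; and (ii) the $\delta$-neighbourhood of the consensus segment $\mathcal{T}_{C,n}$, which already contains the corners $(0,\dots,0)$ and $(1,\dots,1)$. Once the trajectory is permanently within $\delta$ of $\mathcal{T}_n$ it cannot cross the gaps between these pieces, and for the Attraction--Repulsion Model \ref{ex:taumod} it cannot jump between them in a single interaction either (an interaction only perturbs the two selected coordinates, and near a trivialized state those two are either within $\tau$ — hence attracted and kept together — or at distance $>\tau$ — hence repelled and kept near the extremes). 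So the trajectory is eventually confined to one piece. If it is a mixed-corner ball, letting $\delta\downarrow 0$ forces convergence to that corner. If it is the consensus neighbourhood, one must still exclude a slow drift of $\Phi_t$ along the diagonal; for Model \ref{ex:taumod} this is automatic, since all pairwise distances are then below $\tau$ so every interaction is attractive and hence preserves $\sum_i\phi_t^i$, freezing the empirical mean at some $\bar\alpha$ and giving $\Phi_t\to(\bar\alpha,\dots,\bar\alpha)$. For a general interaction function one would instead add a matching non-degeneracy assumption on $h$ (for instance that $\alpha\mapsto h(\alpha,\dots,\alpha)$ be continuous and strictly monotone, so $h^{-1}(h_\infty)$ meets $\mathcal{T}_{C,n}$ in at most one point) together with a connectedness argument for the limit set.

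The main obstacle is precisely this last step: controlling the motion along the continuum of consensus configurations — and, for an arbitrary interaction function, ruling out single-step jumps between the components of the neighbourhood of $\mathcal{T}_n$. The martingale input and the passage through (\ref{for:condmg}) are routine by comparison.
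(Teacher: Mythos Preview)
Your first two paragraphs reproduce the paper's proof exactly: apply Doob's convergence theorem to the non-negative supermartingale (resp.\ bounded submartingale) $\{h(\Phi_t)\}$, deduce that the increments $|h(\Phi_{t+1})-h(\Phi_t)|$ vanish almost surely, and feed this into hypothesis~(\ref{for:condmg}) to obtain $\min_{Y\in\mathcal{T}_n}\|\Phi_t-Y\|_\infty\to 0$. The paper's proof stops precisely there and declares trivialization.

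Your third paragraph goes beyond the paper. You correctly observe that Definition~\ref{def:trivialization} requires proximity to a \emph{single} $Y\in\mathcal{T}_n$ for all large $t$, not merely proximity to the set $\mathcal{T}_n$, so that drift along the consensus diagonal $\{(\alpha,\dots,\alpha):\alpha\in I\}$ must still be excluded. This is a genuine subtlety the paper's three-line argument does not address. Your component-separation argument for the $\delta$-neighbourhood of $\mathcal{T}_n$, together with the mean-preservation observation under attraction in Model~\ref{ex:taumod}, is a sound way to close the gap for that specific model; and you are right that for a general interaction function some additional hypothesis (e.g.\ your monotonicity condition on $\alpha\mapsto h(\alpha,\dots,\alpha)$, or a one-step continuity bound preventing jumps between components) would be needed. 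In short: your approach coincides with the paper's, but you have spotted and partially repaired a gap that the paper leaves open.
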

\begin{proof}
By Doob's Martingale Convergence Theorem \cite{kushner1971introduction}, $h(\Phi_t)$ converges almost surely to a random variable, as $t \rightarrow \infty$. Hence, it is Cauchy almost surely, i.e. for any $\varepsilon >0$ there exists $t_0$ such that for any $t\geq t_0$, $|h(\Phi_{t+1}) -h(\Phi_t)| < \varepsilon$. By (\ref{for:condmg}), $\min_{Y \in \mathcal{T}_{n}} \Vert \Phi_t -Y \Vert_{\infty} < \delta(\varepsilon)$ for any $t\geq t_0$, thus the process trivializes almost surely.
% Then (\ref{for:condmg}) guarantees that almost sure convergence of $h(\Phi_t)$ implies almost sure trivialization of $\Phi_t$, as $t \rightarrow \infty$.
\end{proof}
\begin{conjecture}
For any $f$ that satisfies the conditions of Theorem \ref{thm:condtriv} and for any finite $n$, there exists a function $h:[0,1]^n \rightarrow \mathbb{R}$ such that the conditions of Theorem \ref{thm:martingale} hold.
\end{conjecture}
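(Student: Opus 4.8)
The plan is to exhibit a potential of \emph{pairwise} form,
\[
h(\Phi)\;=\;\sum_{1\le i<j\le n} w\left(|\phi_i-\phi_j|\right),
\]
where $w:[0,1]\to[0,\infty)$ is continuous with $w(0)=w(1)=0$ and $w(d)>0$ for $d\in(0,1)$. Then $h\ge 0$, and $h(\Phi)=0$ exactly when every pairwise distance lies in $\{0,1\}$, i.e.\ exactly when $\Phi\in\mathcal{T}_n$; more generally small $h(\Phi)$ should force $\Phi$ close to $\mathcal{T}_n$. The shape of $w$ should mirror the attraction--repulsion cleavage of $f$: take $w$ increasing on $[0,\tau]$ and decreasing on $[\tau,1]$, peaked near $\tau$, so that the term carried by an interacting pair always decreases --- when the pair attracts its distance contracts towards $0$ by a factor $C_f^A<1$, and when it repels its distance expands towards $1$ by a factor $C_f^R>1$. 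If the convenient choice of $w$ makes $h$ a bounded submartingale rather than a supermartingale, that is equally admissible for Theorem~\ref{thm:martingale}.

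First I would verify the (super/sub)martingale property by conditioning on the pair $(i,j)$ drawn at a given step (write $x=\phi_i$, $y=\phi_j$, $z_k=\phi_k$ for $k\ne i,j$, and $(x',y')=f(x,y)$) and decomposing
\[
\Delta h=\big[w(|x'-y'|)-w(|x-y|)\big]+\sum_{k\ne i,j}\big(\Delta w(|x'-z_k|)+\Delta w(|y'-z_k|)\big).
\]
The ``own'' term decreases by a definite amount controlled by $C_f^A$ (or $C_f^R$) and the regularity of $w$. The $2(n-2)$ ``cross'' terms are the crux: under attraction $x',y'\in[x,y]$, so each bystander distance moves by $O(\lambda\tau)$, whereas under repulsion the two agents leave $[x,y]$ and bystander distances can move more substantially. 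One would bound the cross sum with Lipschitz and monotonicity estimates on $w$ and then average over the uniformly chosen pair; heuristically, whenever many pairs sit in the repulsion regime the population is already splitting towards the extremes, so the positive cross contributions are dominated in expectation by the negative own terms. Making this last estimate quantitative --- uniformly over all $\Phi\in I^n$ and over the whole admissible class of (possibly asymmetric, nonlinear) interaction functions --- is the main obstacle, and is why the statement is only a conjecture.

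Granting such an $h$, condition (\ref{for:condmg}) follows from a ``movement at non-stable configurations'' estimate: if $\min_{Y\in\mathcal{T}_n}\|\Phi-Y\|_\infty\ge\delta$, one exhibits a specific pair whose interaction changes its own $w$-term, hence $h$, by at least some $\varepsilon_0(\delta)>0$. Being $\delta$-far from $\mathcal{T}_n$ means either the diameter is bounded below while not all points are near $\{0,1\}$ (so some pair has distance in a compact subinterval of $(0,1)$, which attracts or repels nontrivially), or $\Phi$ is close to a consensus configuration while some pair still has distance $>\tau$, etc.; a short case analysis produces the pair each time. One subtlety: that pair is selected only with probability $\binom{n}{2}^{-1}$ and not necessarily at the next step, so the rigorous reading of the argument behind Theorem~\ref{thm:martingale} is to combine Doob's theorem with Borel--Cantelli --- on the almost sure event that $\{h(\Phi_t)\}$ is Cauchy, the event ``$\Phi_t$ is $\delta$-far from $\mathcal{T}_n$ and the $\varepsilon_0(\delta)$-changing pair is drawn'' occurs only finitely often, forcing $\Phi_t$ into the $\delta$-neighborhood of $\mathcal{T}_n$ eventually.

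A fallback that avoids guessing $w$ is to build $h$ from Theorem~\ref{thm:condtriv} itself: with $\rho(X):=\min_{Y\in\mathcal{T}_n}\|X-Y\|_\infty$, set $h(X):=\sum_{t\ge 0}\mathbb{E}[\rho(\Phi_t^X)]$. Near $\mathcal{T}_n$ the distance $\rho$ contracts geometrically along trajectories (by $C_f^A$ or $1/C_f^R$ at each effective step), so the series should converge to a finite non-negative function with $\mathbb{E}[h(\Phi_1^X)]=h(X)-\rho(X)\le h(X)$; thus $h$ is automatically a supermartingale whose one-step expected decrement is exactly $\rho(\Phi_t)$. The difficulty then shifts to the converse: since $h$ is a forward-looking expectation, deducing ``$\Phi_t$ close to $\mathcal{T}_n$'' from ``the realized increment $|h(\Phi_{t+1})-h(\Phi_t)|$ is small'' is delicate and again seems to need the Borel--Cantelli device. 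Either route, the essential content of the conjecture is the uniform quantitative assertion that every non-trivialized configuration admits an interaction moving a natural global potential by a definite amount --- intuitively the polarization-style ``movement away from non-stable points'' --- and establishing this with the uniformity that Theorem~\ref{thm:martingale} demands is what remains open.
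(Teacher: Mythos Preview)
The statement you are attempting is labeled a \emph{conjecture} in the paper, and the paper does not supply a proof. Its sole contribution beyond stating the conjecture is the single example
\[
h(\phi)=\sum_{(i,j)\in\binom{[n]}{2}}\bigl|\tau-|\phi_i-\phi_j|\bigr|,
\]
together with the remark that $\{h(\Phi_t)\}$ is a bounded sub-martingale for $n=3$ but that this fails for $n\ge 4$. So there is no ``paper's proof'' against which to check yours; your proposal is an attack on an open problem and you are candid that it does not close it.

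Your pairwise ansatz $h(\Phi)=\sum_{i<j}w(|\phi_i-\phi_j|)$ in fact subsumes the paper's example: the choice $w(d)=|\tau-d|$ (minimum at $\tau$, maxima at $0$ and $1$) is exactly the sub-martingale variant you allow for, and the failure the paper reports for $n\ge 4$ is precisely the cross-term obstruction you isolate --- when the selected pair moves, the $2(n-2)$ bystander distances $|x'-z_k|,|y'-z_k|$ can push $h$ the wrong way and swamp the ``own'' term. So your diagnosis of the difficulty matches the paper's empirical finding, and the paper offers no hint on how to overcome it. Your second route, $h(X)=\sum_{t\ge0}\mathbb{E}[\rho(\Phi_t^X)]$, is not discussed in the paper at all; it is a genuinely different idea, with the attractive feature that the supermartingale property is automatic, but --- as you note --- at the cost of making condition~(\ref{for:condmg}) opaque, since a small realized increment of a forward expectation need not pin down the current state. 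In short: your write-up is a reasonable research outline that both contains and goes beyond what the paper says, but neither you nor the paper has a proof.
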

% For instance, a good candidate could be $h(\phi) = \sum_{(i,j) \in {[n] \choose 2}} |\tau - |\phi_i-\phi_j|| $.
For example, consider $h(\phi) = \sum_{(i,j) \in {[n] \choose 2}} |\tau - |\phi_i-\phi_j||$. For $n=3$, one can show that $\{h(\Phi_t) \}_{t \in \mathbb{N}}$ is a bounded sub-martingale. However, this does not hold for $n\geq 4$.

\subsection{Other problems}
In Theorem \ref{thm:condtriv} we defined assumptions on $f$ that guarantee almost sure trivialization. The next step is to extend these assumptions to a larger class of interaction functions.

However, it is interesting to study problems related to the Attraction-Repulsion model specifically. For instance, computing the probability of polarization against the probability of consensus depending on the threshold $\tau$ and on the number of agents $n$ (Figure \ref{fig:probpolarization}) could be useful for several applications. For $n=2$, the computation is straightforward, since the long-run behavior depends uniquely on the initial state; for larger $n$ it becomes trickier. Moreover, it is interesting to compute the expected mixing time, depending on the parameters of the model. We expect it to depend on $\tau, \lambda, \mu $ and $n$.

Another direction is extending the D-dimensional model to a larger class of convex domains, that are not necessarily hypercubes, in order to capture a wider range of dynamics of opinions. We expect the process to converge either to consensus in one cluster of mild opinions, or to polarization towards the boundary. We ran some experiments on the unit circle. For a large number of agents, we observed that points either merge to the center of the circle, or are pushed towards the border and move along the circumference. When points are close to the circumference, the process becomes similar to the Bounded Confidence Model, since the opinions can be attracted to each other if at distance less than $\tau$, but the repulsion effect is less significant. We noticed that they eventually separate into clusters along the border of distance at least $\tau$ (Figure \ref{fig:circ}).

\begin{figure}[h]
    \centering
    \includegraphics[width = 0.11\textwidth]{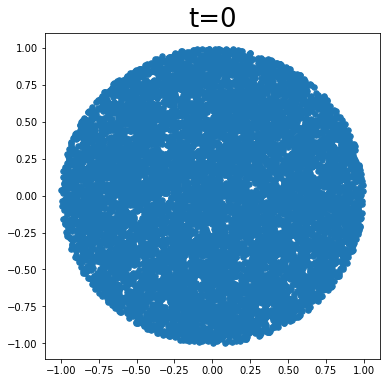}
    \includegraphics[width = 0.11\textwidth]{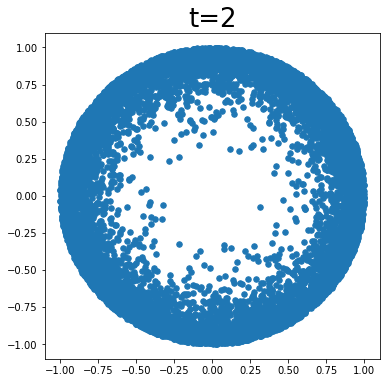}
    \includegraphics[width = 0.11\textwidth]{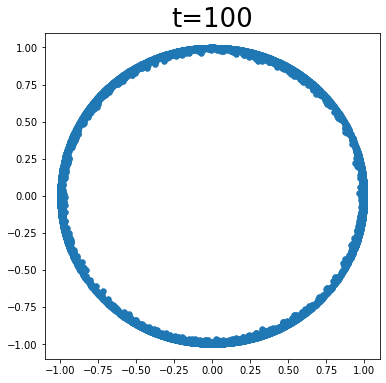}
    \includegraphics[width = 0.11\textwidth]{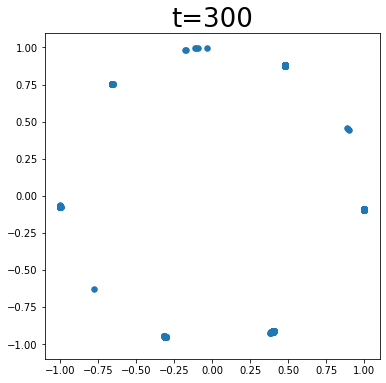}

    \caption{Simulation of points on a unit circle space over a $\tau$ = .5 threshold.}
    \label{fig:circ}
\end{figure}

%%%%%%
%% Appendix:
%% If needed a single appendix is created by
%%
%\appendix

%% If several appendices are needed, then the command
%%
% \appendices
%%
%% in combination with further \section-commands can be used.
%%%%%%

%\section*{Acknowledgment}

%%%%%%
%% To balance the columns at the last page of the paper use this
%% command:
%%
\enlargethispage{-1.2cm}
%%
%% If the balancing should occur in the middle of the references, use
%% the following trigger:
%%
\IEEEtriggeratref{9}
%%
%% which triggers a \newpage (i.e., new column) just before the given
%% reference number. Note that you need to adapt this if you modify
%% the paper.  The "triggered" command can be changed if desired:
%%
%\IEEEtriggercmd{\enlargethispage{-20cm}}
%%
%%%%%%

%%%%%%
%% References:
%% We recommend the usage of BibTeX:
%%
%\bibliographystyle{IEEEtran}
%\bibliography{definitions,bibliofile}
%%
%% where we here have assume the existence of the files
%% definitions.bib and bibliofile.bib.
%% BibTeX documentation can be obtained at:
%% http://www.ctan.org/tex-archive/biblio/bibtex/contrib/doc/
%%%%%%

\bibliographystyle{IEEEtran}
\bibliography{references.bib}

\end{document}